\def\@mkbibcitation{}
\newtheorem{theorem}{Theorem}
\newtheorem*{theorem*}{Theorem}
\newtheorem{definition}{Definition}
\newtheorem*{definition*}{Definition}
\newtheorem{lemma}{Lemma}
\newtheorem*{lemma*}{Lemma}
\algrenewcommand\algorithmicindent{1.0em}
\renewcommand{\algorithmicforall}{\textbf{for each}}
\newcommand{\pushcode}[1][1]{\hskip\ALG@thistlm \hskip#1\dimexpr\algorithmicindent\relax}
\algnewcommand{\LineComment}[1]{\Statex \hskip\ALG@thistlm \(\triangleright\) #1}
\renewcommand{\ALG@beginalgorithmic}{\small}
\algnewcommand{\LineCommentIndented}[1]{\Statex \pushcode[2] \(\triangleright\) #1}
\algnewcommand{\LineCommentIndented1}[1]{\Statex \pushcode[1] \(\triangleright\) #1}
\algrenewcommand\alglinenumber[1]{\scriptsize #1\,\,}
\newcommand{\bhline}[1]{\noalign{\hrule height #1}}
\begin{document}


\title{Fast Subgraph Matching by Exploiting Search Failures}



%
%
%
%

\numberofauthors{4} 

\author{
%
%
\alignauthor
Junya Arai\\
       \affaddr{Nippon Telegraph and Telephone Corporation}\\
       \email{junya.arai.at@hco.ntt.co.jp}
\alignauthor
Makoto Onizuka\\
       \affaddr{Osaka University}\\
       \email{onizuka@ist.osaka-u.ac.jp}
\alignauthor
Yasuhiro Fujiwara\\
       \affaddr{Nippon Telegraph and Telephone Corporation}\\
       \email{yasuhiro.fujiwara.kh@hco.ntt.co.jp}
\and  
\alignauthor
Sotetsu Iwamura\\
       \affaddr{Hokkaido University}\\
       \email{s-iwamura@mcip.hokudai.ac.jp}
}
\date{Dec 2020}  

\maketitle

\renewcommand{\thefootnote}{\fnsymbol{footnote}}
\footnotetext[0]{This paper is a translation of the following Japanese paper published in March 2018: Junya Arai, Makoto Onizuka, Yasuhiro Fujiwara, and Sotetsu Iwamura. Fast Subgraph Matching Using a History of Search Failures. Proceedings of the 10th Forum on Data Engineering and Information Management (DEIM), pp. 1--9, mar 2018, available at https://db-event.jpn.org/deim2018/data/papers/1.pdf (In Japanese).}
\renewcommand{\thefootnote}{\arabic{footnote}}
\begin{abstract}  
Subgraph matching is a compute-intensive problem that asks to enumerate all the isomorphic embeddings of a query graph within a data graph.
This problem is generally solved with backtracking, which recursively evolves every possible partial embedding until it becomes an isomorphic embedding or is found unable to become it.
While existing methods reduce the search space by analyzing graph structures before starting the backtracking, it is often ineffective for complex graphs.
In this paper, we propose an efficient algorithm for subgraph matching that performs on-the-fly pruning during the backtracking.
Our main idea is to `learn from failure'.
That is, our algorithm generates failure patterns when a partial embedding is found unable to become an isomorphic embedding.
Then, in the subsequent process of the backtracking, our algorithm prunes partial embeddings matched with a failure pattern.
This pruning does not change the result because failure patterns are designed to represent the conditions that never yield an isomorphic embedding.
Additionally, we introduce an efficient representation of failure patterns for constant-time pattern matching.
The experimental results show that our method improves the performance by up to 10000 times than existing methods.
\end{abstract}

\section{Introduction}

Graph data play a central role in the analysis of various information such as the linking structure of the web, social relationships, and financial transactions.
One of the most typical operations on graphs is a subgraph matching query, which enumerates isomorphic embeddings of a query graph within a data graph.
Popular query languages on graph databases, such as SPARQL, Cypher, and Gremlin, provide native support for subgraph matching queries \cite{Voigt2017}.
Subgraph matching is also an essential building block in data analytics applications for investigating human relations \cite{Fang2016}, aiding offline sales \cite{Hu2016}, detecting malware \cite{Park2010}, and so on.
These applications especially utilize graphs whose vertices have a label that represents a type of entity (e.g., person, company, and product).
However, because subgraph matching is NP-hard, its high computational cost often prevents it from practical use \cite{Han2013}.

\begin{figure}[t]
  \centering
  \subfloat[Query graph $Q$]{
    \hspace{30pt}
    \includegraphics[scale=0.4]{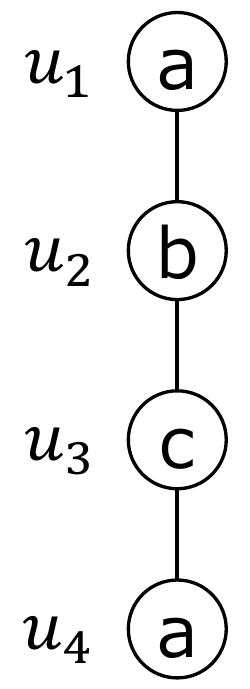}
    \hspace{30pt}
    \label{fig:injection_example_query}
  }
  \subfloat[Data graph $G$]{
    \includegraphics[scale=0.4, page=1]{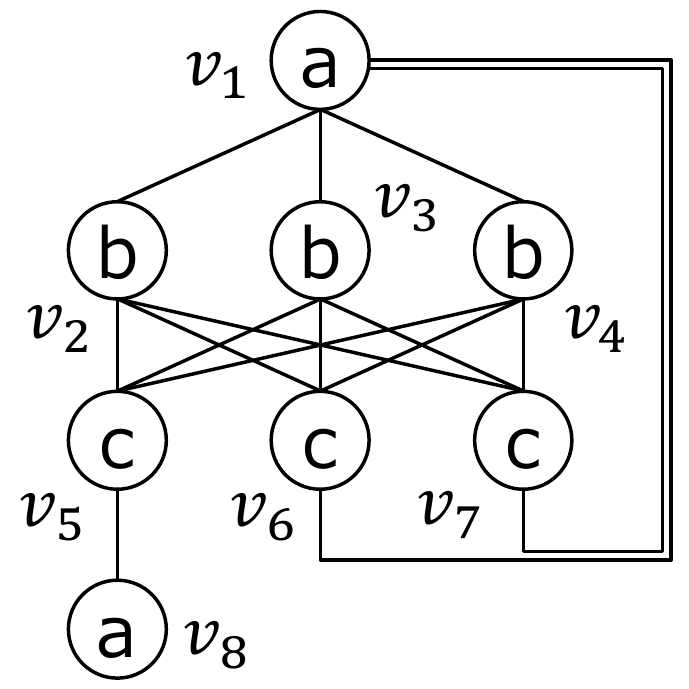}
    \label{fig:injection_example_graph}
  }
  \caption{Example of a query graph and a data graph. $u_i$ and $v_i$ denote a vertex ID, and alphabetic characters in each vertex denote a vertex label}
  \label{fig:injection_example}
\end{figure}

\begin{figure}[t]
  \centering
  \subfloat[Existing methods (e.g., CFL-Match\cite{Bi2016})]{
    \includegraphics[scale=0.39, page=1]{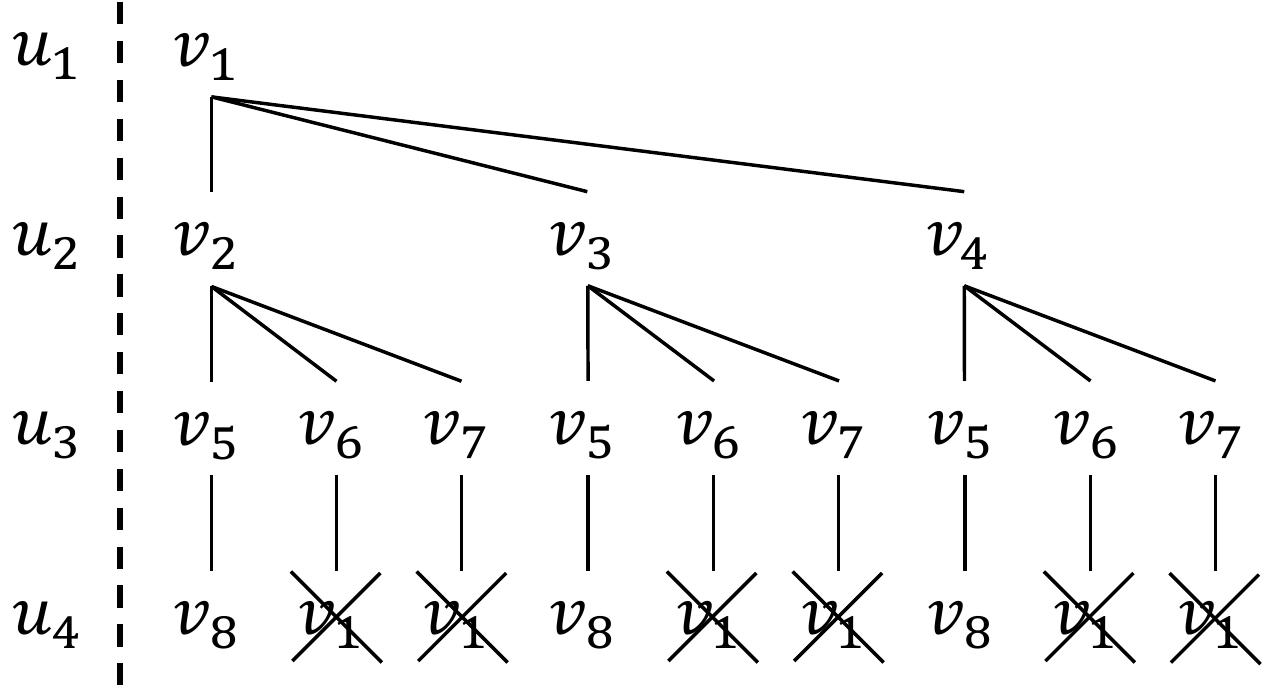}
    \label{fig:injection_example_searchtree_cfl}
  }
  \subfloat[Proposed method]{
    \includegraphics[scale=0.39]{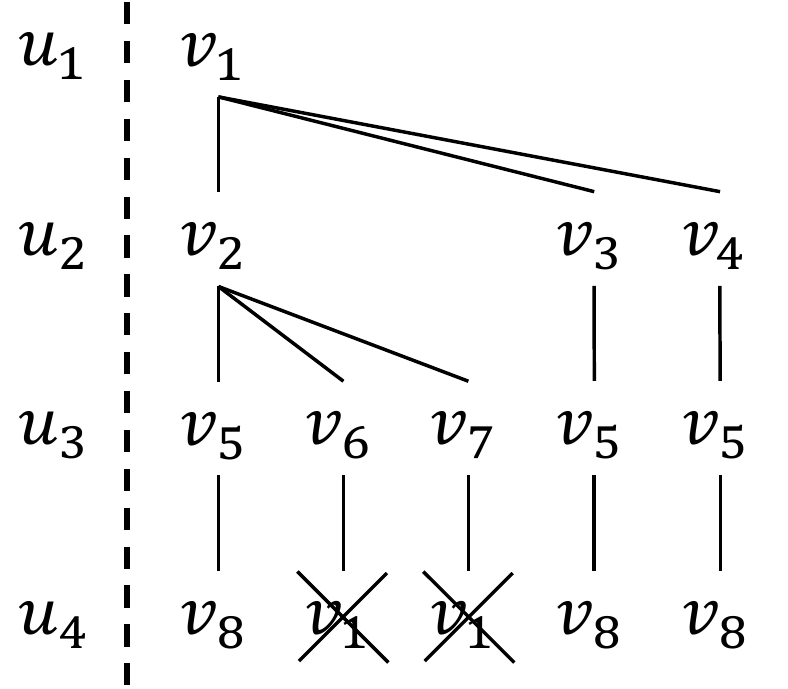}
    \label{fig:injection_example_searchtree_gup}
  }
  \caption{Search trees that show a process of backtracking on $Q$ and $G$ shown in Fig.\ \ref{fig:injection_example}. X-signs denote a search failure.}
  \label{fig:injection_example_searchtree}
\end{figure}

Consequently, many methods have been proposed to improve the performance of subgraph matching \cite{Cordella2004, He2008, Shang2008, Zhao2010, Han2013, Ren2015, Bi2016}.
These methods employ graph structural analysis to reduce the search space of backtracking, which is a general approach for subgraph matching.
The backtracking discovers isomorphic embeddings with a recursive process like depth-first search \cite{Ullmann1976}.
Consider a search tree whose nodes represent a mapping of a single vertex in a query graph (query vertex) onto a vertex in a data graph (data vertex).
The backtracking search starts with an empty partial embedding and adds mappings to it by recursively visiting child nodes.
If the partial embedding becomes an isomorphic embedding, it is reported as a solution.
When the search fails, namely, the partial embedding is found unable to become a solution, it goes back to the state at the parent node and visits another child node.
This process continues until it reports all or a specified number of isomorphic embeddings.
For improving the performance, existing methods reduce the number of search failures by pruning tree nodes before starting the backtracking.
The pruning is based on a comparison of the local structures, such as adjacent labels and spanning trees, between query vertices and data vertices \cite{He2008, Zhao2010, Han2013, Bi2016}.
For example, the state-of-the-art method \cite{Bi2016} produces a search tree shown in Fig.\ \ref{fig:injection_example_searchtree_cfl} for query graph $Q$ and data graph $G$ in Fig.\ \ref{fig:injection_example}.
The backtracking search discovers and reports the first isomorphic embedding by traversing the leftmost path $v_1$-$v_2$-$v_5$-$v_8$.
Denoting a mapping of $u$ onto $v$ by $(u, v)$, this path stands for the search process that incrementally adds $(u_1, v_1)$, $(u_2, v_2)$, $(u_3, v_5)$, and $(u_4, v_8)$ to an empty partial embedding.
Next, the nodes in path $v_1$-$v_2$-$v_6$-$v_1$ are visited.
This path produces a partial embedding that maps both $u_1$ and $u_4$ onto $v_1$.
As we describe in Definition \ref{def:subgraph_isomorphism}, partial embeddings cannot become an isomorphic embedding if they map different query vertices onto the same data vertex.
Thus, the search fails and moves to other tree nodes.
In this way, existing methods perform the backtracking on the pruned search tree.

However, the structural analysis-based pruning is ineffective for complex graphs where the same local structures appear frequently.
For example, while query vertex $u_3$ and data vertices $v_5$, $v_6$, and $v_7$ are adjacent to vertices of label $b$ and $a$ in common, we can obtain an isomorphic embedding only when it maps $u_3$ onto $v_5$, as shown in Fig.\ \ref{fig:injection_example_searchtree_cfl}.
This is because $v_6$ and $v_7$ lack an adjacent vertex which $u_4$ can be mapped onto.
Nevertheless, existing methods \cite{He2008,Zhao2010,Han2013,Bi2016} cannot prune the nodes for $v_6$ and $v_7$ in the search tree.
Since these unpruned nodes repeatedly appear under each of the $u_2$-mapping nodes (i.e., $v_2$, $v_3$, and $v_4$) and repeatedly cause search failures, they significantly increase the number of search failures.
Thus, the performance of existing methods sensitively depends on the structure of given graphs.

Based on these observations, we propose a novel subgraph matching algorithm that `learns from failure'.
In contrast to structural analyses prior to the backtracking, our method performs on-the-fly pruning during the backtracking.
When a partial embedding caused a search failure, our method extracts and records a pattern consisting of the vertex mappings that never appear in isomorphic embeddings.
In other words, a partial embedding always causes a search failure if it matches (i.e., contains) the extracted pattern.
Thus, our method prunes such partial embeddings in the subsequent process of the backtracking.
For example, our method produces a search tree shown in Fig.\ \ref{fig:injection_example_searchtree_gup} for $Q$ and $G$ in Fig.\ \ref{fig:injection_example}.
Let us focus on path $v_1$-$v_2$-$v_6$-$v_1$ in the tree.
It causes a search failure since both $u_1$ and $u_4$ is mapped onto $v_1$.
In this case, $v_1$ is only one adjacent vertex of $v_6$ with label $a$, and hence a partial embedding inevitably causes a search failure if it contains mappings $(u_1, v_1)$ and $(u_3, v_6)$.
To exploit this property, our method records $\{(u_1, v_1), (u_3, v_6)\}$ as a pattern of failing partial embeddings.
Similarly, it records pattern $\{(u_1, v_1), (u_3, v_7)\}$ when the search failed by mapping $u_3$ onto $v_7$ and $u_4$ onto $v_1$.
In the subsequent process, our method prunes partial embeddings if they match these patterns.
As a result, it involves fewer search failures compared with that of existing methods, as shown in Fig.\ \ref{fig:injection_example_searchtree}.

Our method has three advantages:\\
(1) \textbf{Robust:}
While existing methods focus on structural analyses before the backtracking, our method exploits information offered by the backtracking.
This makes our method less sensitive to graph structures, and thus it can eliminate search failures that existing methods cannot.\\
(2) \textbf{Scalable:}
Larger query graphs and data graphs make subgraph matching drastically harder.
This is because it suffers from exponential growth of the number of possible mappings between query vertices and data vertices.
Thanks to the effective pruning, our method can handle larger-scale graphs within practical time.\\
(3) \textbf{Exact:}
Some existing methods perform efficient subgraph matching at the sacrifice of the exactness.
In contrast to them, our method is proved to prune only unnecessary partial embeddings and hence exactly enumerates all the isomorphic embeddings.

The experimental results show that our method is up to 10000 times faster than existing methods.
With those methods, complex query sets are almost intractable because they often need more than one day to respond.
On the other hand, our method can respond to them within minutes or seconds on average.

This paper is organized as follows.
Section 2 reviews related work, and Section 3 gives the preliminaries.
Sections 4 and 5 present details of the proposed method and the experimental results, respectively.
Finally, Section 6 concludes this paper.

\section{Related Work}
\label{sec:related_work}

We can find two kinds of problem settings among the studies on subgraph matching.

The first one enumerates subgraphs isomorphic to a query graph within a single, large data graph.
This paper focuses on this problem setting.
For this purpose, Ullmann \cite{Ullmann1976} is one of the most traditional and well-known algorithms.
It originated subgraph matching based on backtracking.
After Ullmann, many studies have been conducted mostly on two techniques: candidate filtering and matching order selection.
Candidate filtering produces a candidate set $C[u_i]$ for each query vertex $u_i$.
$C[u_i]$ is a set of data vertices which $u_i$ can be mapped onto.
This technique has been improved to eliminate unnecessary vertices in the candidate set.
Ullmann \cite{Ullmann1976} employs a filter based on vertex labels and degrees.
GraphQL \cite{He2008} and SPath \cite{Zhao2010} add an approximate isomorphism test on local structures around query vertices and data vertices.
The second technique, matching order selection, reorders query vertices so that the backtracking generates fewer partial embeddings.
Because the backtracking generates only valid partial embeddings, they decrease if matching starts from query vertices that have fewer valid partial embeddings.
VF2 \cite{Cordella2004} uses a matching order such that, for arbitrary $i$, query vertices $u_1, u_2, \ldots, u_i$ induce one connected component.
With this ordering, the backtracking can ignore candidate vertices that is not adjacent to the partial embedding generated in the previous step.
In addition to the connectivity, QuickSI \cite{Shang2008} considers rarity of labels to start matching from a substructure of a query graph that is rare in a data graph.
Recent methods perform candidate filtering and matching order selection simultaneously.
TurboISO \cite{Han2013} and CFL-Match \cite{Bi2016} filter candidates using an approximate isomorphism test between a data graph and a spanning tree of a query graph.
Then, they estimate the number of partial embeddings by capturing connectivity between candidate vertices and start matching from a substructure with fewer partial embeddings.
The existing methods above focus on structural analyses before the backtracking.
In contrast to them, our method prunes partial embeddings on-the-fly during the backtracking.
Note that we can also combine our method and structural analyses to obtain higher performance.

The second problem setting is one-versus-many; it takes one query graph and many small data graphs and finds data graphs that contain the query graph.
For this purpose, `filter-and-verify' is a common approach.
This approach builds an index that summarizes each data graph in advance.
For a given query, the `filter` step uses the index to extract data graphs that may contain the query graph.
Then, the `verify` step uses an exact isomorphism test to check that those graphs contain the query graph.
In this problem setting, most studies focus on how to summarize a data graph because it determines accuracy of the `filter` step.
For example, they captures paths \cite{Bonnici2010, Giugno2013} and frequent subgraphs \cite{Cheng2007}.
We can use our method in this problem setting to improve the performance of the `verify' step.

\section{Preliminaries}

This section gives a problem definition, terminology, and a baseline algorithm discussed in the following sections.

\subsection{Problem Definition}

This paper focuses on vertex-labeled undirected graph $G = \left(V_G, E_G, \Sigma, \ell\right)$.
$V_G$ is a set of vertices, $E_G \subseteq V_G \times V_G$ is a set of edges, $\Sigma$ is a set of labels, $\ell$ is a function that maps a vertex to its label.
In subgraph matching, $G$ is called a \emph{data graph}.
This paper also considers \emph{query graph} $Q = \left(V_Q, E_Q, \Sigma, \ell\right)$ whose vertices are numbered like $u_1, u_2, \ldots, u_n$.

\begin{definition}[Subgraph isomorphism]
  \label{def:subgraph_isomorphism}
  $Q$ is \emph{subgraph isomorphic} to $G$ if we can define an embedding $M: V_Q \rightarrow V_G$ that satisfies the following constraints\footnote{We named each constraint for convenience in this paper. In general, these constraints do not have well-accepted names.}:\\
  (1) \textbf{Label constraint:} $\forall u_i \in V_Q,\, \ell(u_i) = \ell(M[u_i])$,\\
  (2) \textbf{Edge constraint:} $\forall (u_i, u'_i) \in E_Q,\, (M[u_i], M[u'_i]) \in E_G$,\\
  (3) \textbf{Injection constraint:}\\
  \hspace{1.8em}$\forall u_i, u'_i \in V_Q,\, u_i \neq u'_i \Rightarrow M[u_i] \neq M[u'_i]$.
\end{definition}

\begin{definition}[Subgraph matching]
Given query graph $Q$ and data graph $G$, \emph{subgraph matching} is a problem to enumerate all the subgraph isomorphic embeddings of $Q$ in $G$.
\end{definition}

While subgraph matching requires to enumerate all the embedding, it is often impractical because the number of embeddings may cause the combinatorial explosion.
Hence, it is a common practice to stop enumerating if the number of found embeddings reaches a specific threshold (e.g., 1000) \cite{Lee2012a, Han2013, Bi2016}.

\subsection{Terminology}

{\setlength{\tabcolsep}{1mm}
\begin{table}[!t]
  \centering
  \caption{Symbols used in this paper}
  \label{tb:notations}
  \begin{tabular}{cl}
    \bhline{1pt}
    \multicolumn{1}{c}{Symbol} & \multicolumn{1}{c}{Definition} \\
    \hline
    $Q, G$                 & Query graph and data graph \\
    $V_Q, V_G$             & Set of vertices \\
    $E_Q, E_G$             & Set of edges \\
    $u_i, v$               & Query vertex and data vertex \\
    $C[u_i]$               & Set of candidate vertices for $u_i$ \\
    $M, \hat{M}$           & Complete embedding and partial embedding \\
    $D, \Gamma$            & Dead-end pattern and dead-end mask \\
    $N(\cdot)$             & Set of neighboring vertices \\
    $\ell(\cdot)$          & Label of the vertex  \\
    $\mathrm{dom}(\cdot), \mathrm{ran}(\cdot)$  & Domain and range of the embedding \\
    $\mathrm{Dead}(\cdot)$ & Predicate: the embedding is a dead-end pattern \\
    \hline
  \end{tabular}
\end{table}
}  

Table \ref{tb:notations} lists important symbols used in this paper.
We additionally define some notations and terms.

\begin{definition}[Representation of embeddings]
This paper considers $M$ a set of pairs of query vertex $u_i$ and data vertex $v$.
Specifically, $(u_i, v) \in M$ if and only if $M$ maps $u_i$ onto $v$ (i.e., $v = M[u_i]$).
\end{definition}

\begin{definition}[Domain and range of embeddings]
A domain and a range of embedding $M$ are defined as follows, respectively: $\mathrm{dom}(M) = \Set{u_i | (u_i, v) \in M}$ and $\mathrm{ran}(M) = \Set{v | (u_i, v) \in M}$.
\end{definition}

For example, $M = \{(u_1, v_1),\allowbreak (u_2, v_2),\allowbreak (u_3, v_5),\allowbreak (u_4, v_8)\}$ is an embedding of $Q$ in $G$ shown in Fig.\ \ref{fig:injection_example}.
We also have $\mathrm{dom}(M) = \{u_1,\allowbreak u_2,\allowbreak u_3,\allowbreak u_4\}$ and $\mathrm{ran}(M) = \{v_1,\allowbreak v_2,\allowbreak v_5,\allowbreak v_8\}$.

\begin{definition}[Complete embedding]
We say that $M$ is a \emph{complete embedding} if $M$ is an isomoprhic embedding and $\mathrm{dom}(M) = V_Q$.
\end{definition}

We also use $\hat{M}$ for an embedding and call it a \emph{partial embedding} to emphasize that this embedding may not be a complete embedding.
The reader should be aware that a complete embedding is a special case of a partial embedding.

\subsection{Backtracking in Subgraph Matching}

Algorithm \ref{alg:backtracking} shows a simple backtracking algorithm for subgraph matching.
Recursive function \textproc{Search} takes partial embedding $\hat{M}$ and candidate set $C$.
The simplest way to obtain $C$ is to extract data vertices that have the same label as a query vertex as follows\cite{Ullmann1976}:
\begin{equation}
  \label{eq:label_filter}
  C[u_i] = \Set{ v \in V_G | \ell(v) = \ell(u_i) }.
\end{equation}
A function call $\textproc{Search}(\emptyset, C)$ starts the search.
This function reports $\hat{M}$ if it is a complete embedding and returns (lines 2--5).
Otherwise, the function removes candidates in $C$ that do not satisfy the edge constraint (line 6).
The edge constraint requires data vertices $M[u_i]$ and $M[u'_i]$ to be adjacent if query vertex $u_i$ and $u'_i$ are adjacent.
In other words, we must have $M[u_i] \in N(M[u'_i])$ if $u'_i \in N(u_i)$ and $u'_i \in \mathrm{dom}(\hat{M})$.
Hence, $C$ is refined as follows for each query vertex $u_i$:
\begin{equation}
  \label{eq:candidate_refinement}
  C'[u_i] = C[u_i] \cap \bigcap{}_{u'_i \in N(u_i) \cap \mathrm{dom}(\hat{M})} N(\hat{M}[u'_i]).
\end{equation}
Since $\hat{M}$ contains mappings of $u_1, u_2, \ldots, u_k$, the next step maps $u_{k+1}$ onto one of $v \in C'[u_{k+1}]$ (line 7).
If $v$ is not used in $\hat{M}$ (line 8), it maps $u_{k+1}$ onto $v$ in extended partial embedding $(\hat{M} \cup \{(u_{k+1}, v)\})$ and recurses with it and $C'$ (line 9).
Thus, the backtracking enumerates all the complete embeddings by mapping each query vertex so that the partial embedding satisfies the label constraint (Eq.\ \ref{eq:label_filter}), the edge constraint (line 6), and the injection constraint (line 8).

\begin{algorithm}[!t]
\caption{Backtracking Search}
\begin{algorithmic}[1]
  \Require Query graph $Q$ and data graph $G$
  \Ensure All the embeddings of $Q$ in $G$

  \Function{Search}{$\hat{M}, C$}
    \State $k \leftarrow |\hat{M}|$
    \If{$k = |V_Q|$}
      \State Report $\hat{M}$ as a complete embedding
      \State \Return
    \EndIf
    \State $C' \leftarrow$ Candidates refined with the edge constraints (Eq.\ \ref{eq:candidate_refinement})
    \ForAll{$v \in C'[u_{k+1}]$}
      \If{$v \not\in \mathrm{ran}(\hat{M})$}
	\State $\textproc{Search}(\hat{M} \cup \{(u_{k+1}, v)\}, C')$ \label{ln:hat_m}
      \EndIf
    \EndFor
  \EndFunction
\end{algorithmic}
\label{alg:backtracking}
\end{algorithm}

\section{Method}
\label{sec:method}

This section first presents our main idea for pruning and then details of our proposal.

\subsection{Main Idea}

Our idea is to learn from search failures that occurred during backtracking and to avoid repeating the same failures.
A search failure occurs when a partial embedding contains a vertex mapping that violates any of the three constraints shown in Definition \ref{def:subgraph_isomorphism}.
As long as a partial embedding contains a mapping that causes a search failure, it causes a search failure no matter how the other mappings are changed.
Thus, our method extracts a pattern (i.e., a set of mappings) that violates the constraints every time it encounters a search failure.
The subsequent search prunes partial embeddings that match the extracted patterns.
By avoiding search failures that preprocesses cannot find, this method improves the performance of subgraph matching.

\subsection{Pruning with Dead-end Patterns}
\label{sec:pruning_by_deadend}

This section details our pruning method, referencing the naive backtracking-based search (Algorithm \ref{alg:backtracking}).
First, we define the following terms.
\begin{definition}[Dead-end pattern]
  \label{def:deadend}
  Let $D$ be a partial embedding.
  $D$ is a \emph{dead-end pattern}, or simply a \emph{dead-end}, if a complete embedding $M$ such that $D \subseteq M$ does not exist.
  We also use predicate $\mathrm{Dead}(D)$, which is true if and only if $D$ is a dead-end.
\end{definition}
`Dead-end' is a metaphor of paths in the backtracking search tree that never yield complete embeddings.
To prune unnecessary partial embeddings using dead-end patterns, our method adds the following two procedures in function \textproc{Search}: (i) extracting a dead-end pattern at the end of the function and (ii) pruning a partial embedding that matches any of the dead-end patterns.
The first procedure extracts a subset of mappings in partial embedding $\hat{M}$ as dead-end pattern $D$ if the recursive search starting from $\hat{M}$ fails to find complete embeddings.
$D$ is added to the set of dead-end patterns $\mathcal{D}$.
We detail how to extract dead-end patterns in the next section.
The second procedure first checks if newly generated partial embedding $\hat{M} \cup \{(u_{k+1}, v)\}$ (line \ref{ln:hat_m} of Algorithm \ref{alg:backtracking}) matches any of $D \in \mathcal{D}$.
\begin{definition}[Matching with dead-end patterns]
  Let $\hat{M}$ be a partial embedding and $D$ be a dead-end pattern.
  $\hat{M}$ matches $D$ if $D \subseteq \hat{M}$.
\end{definition}
It skips the recursive call of \textproc{Search} if the partial embedding matches.
Thus, the dead-end pattern extraction and the pruning differentiate our method from the naive backtracking.

\subsection{Dead-end Pattern Extraction}

By Definition \ref{def:deadend}, $\hat{M}$ is a dead-end pattern as-is if it is a subject of the dead-end pattern extraction.
However, since the backtracking generates different partial embeddings for each time, $\hat{M}$ as a dead-end pattern never matches any partial embeddings in the subsequent process of the backtracking.
To obtain dead-end patterns that match many partial embeddings, we need to design rules for extracting a small subset of the vertex mappings from partial embeddings.
Extraction rules give a dead-end mask, defined as follows:

\begin{definition}[Dead-end mask]
\label{def:deadend_mask}
Let $\hat{M}$ be a partial embedding and be a dead-end.
A set of query vertices $\Gamma$ is a \emph{dead-end mask} of $\hat{M}$ if it holds the following formula:
\begin{equation}
\label{eq:pivot_set}
\Gamma \subseteq \mathrm{dom}(\hat{M}) \ \wedge\ \mathrm{Dead}\left(\Set{(u_i, v) \in \hat{M} | u_i \in \Gamma} \right).
\end{equation}
\end{definition}
By using dead-end mask $\Gamma$ of $\hat{M}$, we can extract dead-end pattern $D = \Set{(u_i, v) \in \hat{M} | u_i \in \Gamma}$.

The dead-end masks differ depending on the reason for search failures.
In Algorithm \ref{alg:backtracking}, we can find the following three reasons:
(i) $C'[u_{k+1}]$ is empty (line 7), (ii) all the candidates make $\hat{M} \cup \{(u_{k+1}, v)\}$ non-injective (line 8), and (iii) the recursive calls of \textproc{Search} fail to find a complete embedding (line 9).
In addition, our method prunes the partial embedding if it matches a dead-end pattern.
Thus, it has four reasons for search failures.
For each reason, the following subsections detail rules that give a dead-end mask.

\subsubsection{Case 1: Empty Candidate Set}

If the candidate refinement (line 7) returned an empty candidate set for some query vertices, the search inevitably fails in the subsequent processes of the backtracking.
The following lemma gives a dead-end mask for this case.

\begin{lemma}[Dead-end mask for the `empty candidate set' case]
  \label{lem:pivot_no_candidate}
  Let $u_i$ be a query vertex.
  $\Gamma = N(u_i) \cap \mathrm{dom}(\hat{M})$ is a dead-end mask of $\hat{M}$ if $C'[u_i] = \emptyset$.
\end{lemma}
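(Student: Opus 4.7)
The plan is to verify the two conjuncts of Definition \ref{def:deadend_mask} in turn. The first, $\Gamma \subseteq \mathrm{dom}(\hat{M})$, is immediate from the definition $\Gamma = N(u_i) \cap \mathrm{dom}(\hat{M})$. So the substance of the argument lies in showing that $D := \Set{(u_j, v) \in \hat{M} | u_j \in \Gamma}$ is a dead-end pattern, i.e.\ no complete embedding $M$ extends $D$.

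I would argue by contradiction: suppose such an $M$ exists with $D \subseteq M$. Then in particular $M[u_i]$ is defined, and it must satisfy all three constraints of Definition \ref{def:subgraph_isomorphism}. The label constraint gives $M[u_i] \in C[u_i]$ by Eq.\ \ref{eq:label_filter}. For the edge constraint, observe that for every $u_j \in \Gamma = N(u_i) \cap \mathrm{dom}(\hat{M})$, we have $(u_i, u_j) \in E_Q$, so $(M[u_i], M[u_j]) \in E_G$; equivalently $M[u_i] \in N(M[u_j])$. Since $u_j \in \Gamma$ means $(u_j, \hat{M}[u_j]) \in D \subseteq M$, we have $M[u_j] = \hat{M}[u_j]$, hence $M[u_i] \in N(\hat{M}[u_j])$ for every such $u_j$.

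Combining these membership facts with Eq.\ \ref{eq:candidate_refinement},
\[
  M[u_i] \in C[u_i] \cap \bigcap_{u_j \in N(u_i) \cap \mathrm{dom}(\hat{M})} N(\hat{M}[u_j]) = C'[u_i],
\]
which contradicts the hypothesis $C'[u_i] = \emptyset$. Therefore no complete extension of $D$ exists, so $\mathrm{Dead}(D)$ holds and $\Gamma$ is a dead-end mask of $\hat{M}$.

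I do not expect any real obstacle here; the only point to be careful about is the substitution $M[u_j] = \hat{M}[u_j]$, which relies on $D \subseteq M$ together with the fact that $D$ keeps the entire pair $(u_j, \hat{M}[u_j])$ for each $u_j \in \Gamma$. The proof essentially just unfolds the definition of $C'$ and reads off that $M[u_i]$ would have to inhabit an empty set.
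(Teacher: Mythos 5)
Your proof is correct and takes essentially the same approach as the paper's: both argue by contradiction that a complete embedding $M$ extending $D$ would have to place $M[u_i]$ in $C'[u_i]$, which is empty. The only cosmetic difference is that the paper routes through an intermediate set $C'_M[u_i]$ (the refinement computed with $M$ in place of $\hat{M}$) and then shows $C'_M[u_i] \subseteq C'[u_i]$, whereas you show $M[u_i] \in C'[u_i]$ directly; the substance is identical.
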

\begin{proof}
Let us assume $\Gamma$ is not a dead-end mask.
Then, $\Gamma$ does not hold Eq. \ref{eq:pivot_set}.
Since we have $\Gamma \subseteq \mathrm{dom}(\hat{M})$ from its definition, and hence there is complete embedding $M$ such that $\{\, (u_j, v) \in \hat{M} \,|\, u_j \in \Gamma \} \subseteq M$.
By substituting $M$ to $\hat{M}$ in Eq.\ \ref{eq:candidate_refinement}, we obtain $C'_M[u_i] = C[u_i] \cap \bigcap{}_{u'_i \in N(u_i)} N(M[u'_i])$.
Since $M$ satisfies the edge constraint, $M[u_i] \in C'_M[u_i]$.
However, we have $C'_M[u_i] \,\subseteq\, C[u_i] \cap \bigcap{}_{u'_i \in \Gamma} N(\hat{M}[u'_i]) = C'[u_i] = \emptyset$ because $\Gamma \subseteq N(u_i)$ and $\forall u_j \in \Gamma, M[u_j] = \hat{M}[u_j]$.
This contradicts $M[u_i] \in C'_M[u_i]$.
Therefore, $\Gamma$ holds Eq. \ref{eq:pivot_set}.
In other words, $\Gamma$ is a dead-end.
\end{proof}

\subsubsection{Case 2: Non-injective Mapping}

For ease of discussion, we denote extended partial embedding $\hat{M} \cup \{(u_{k+1}, v)\}$ at line 9 by $\hat{M}_+$.
The if-statement at line 8 prevents the recursive call with non-injective $\hat{M}_+$, whose dead-end mask is given by the following lemma.

\begin{lemma}[Dead-end mask for the `non-injective mapping' case]
\label{lem:pivot_noninjective}
Let $\hat{M}_+$ be a partial embedding that violates the injection constraint.
Due to the violation, there exist $u_{i_1}$ and $u_{i_2}$ such that $\hat{M}_+[u_{i_1}] = \hat{M}_+[u_{i_2}]$.
Then, $\Gamma_+ = \{u_{i_1}, u_{i_2}\}$ is a dead-end mask of $\hat{M}_+$.
\end{lemma}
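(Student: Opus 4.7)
The plan is to unwind Definition \ref{def:deadend_mask} on $\Gamma_+ = \{u_{i_1}, u_{i_2}\}$ and verify its two conjuncts. The first conjunct, $\Gamma_+ \subseteq \mathrm{dom}(\hat{M}_+)$, is immediate: by hypothesis $\hat{M}_+[u_{i_1}]$ and $\hat{M}_+[u_{i_2}]$ are both defined (so that their equality is even meaningful), hence $u_{i_1}, u_{i_2} \in \mathrm{dom}(\hat{M}_+)$. So the real content is the second conjunct, namely that the restricted mapping
\[
D = \{(u_i, v') \in \hat{M}_+ \mid u_i \in \Gamma_+\}
\]
is itself a dead-end in the sense of Definition \ref{def:deadend}.

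To prove $\mathrm{Dead}(D)$, I would argue by contradiction: suppose there existed a complete embedding $M$ with $D \subseteq M$. Writing $v = \hat{M}_+[u_{i_1}] = \hat{M}_+[u_{i_2}]$, the set $D$ consists precisely of the two pairs $(u_{i_1}, v)$ and $(u_{i_2}, v)$ (and by $u_{i_1} \neq u_{i_2}$ it really has two elements). Inclusion $D \subseteq M$ then forces $M[u_{i_1}] = v = M[u_{i_2}]$ with $u_{i_1} \neq u_{i_2}$, directly violating the injection constraint of Definition \ref{def:subgraph_isomorphism}. This contradicts the assumption that $M$ is a complete (hence isomorphic) embedding, so no such $M$ exists, proving $\mathrm{Dead}(D)$.

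I don't anticipate any real obstacle here; compared to Lemma \ref{lem:pivot_no_candidate}, which required chasing the candidate refinement formula Eq.\ \ref{eq:candidate_refinement} through a hypothetical extension, this case has no structural content beyond the observation that non-injectivity of $\hat{M}_+$ is witnessed by a two-element sub-pattern, and that this witness is inherited by every superset. The only small point worth stating cleanly is the implicit assumption $u_{i_1} \neq u_{i_2}$, which is what makes the pair violate injection rather than collapse to a single mapping; since Lemma \ref{lem:pivot_noninjective} refers to the injection constraint being violated, this distinctness comes for free from Definition \ref{def:subgraph_isomorphism}(3).
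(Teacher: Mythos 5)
Your proof is correct and follows essentially the same route as the paper's: both verify $\Gamma_+ \subseteq \mathrm{dom}(\hat{M}_+)$ and then observe that any embedding containing the two-element pattern $\{(u_{i_1},v),(u_{i_2},v)\}$ inherits the injection violation, so no complete embedding can contain it. Your version is merely more explicit about the distinctness $u_{i_1} \neq u_{i_2}$ and the contradiction structure, which the paper leaves implicit.
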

\begin{proof}
Arbitrary partial embedding $M$ violates the injection constraint if it contains $\{\, (u_i, v) \in \hat{M}_+[u_i] \,|\, u_i \in \Gamma_+ \,\}$.
Additionally, $\Gamma_+ \subseteq \mathrm{dom}(\hat{M}_+)$ holds.
Therefore, $\Gamma_+$ is a dead-end mask of $\hat{M}_+$.
\end{proof}

Notice that Lemma \ref{lem:pivot_noninjective} gives a dead-end mask of $\hat{M}_+$ while Lemma \ref{lem:pivot_no_candidate} gives a dead-end mask of $\hat{M}$.
We later describe how to convert a dead-end mask of $\hat{M}_+$ to that of $\hat{M}$.

\subsubsection{Case 3: Dead-end Pruning}

$\hat{M}_+$ is pruned if it matches dead-end pattern $D \in \mathcal{D}$.
In this case, we can obtain a dead-end mask from $D$.

\begin{lemma}[Dead-end mask for the `dead-end pruning' case]
\label{lem:pivot_deadend_match}
Let $D$ be a dead-end pattern.
$\Gamma_+ = \mathrm{dom}(D)$ is a dead-end mask of $\hat{M}_+$ if $D \subseteq \hat{M}_+$.
\end{lemma}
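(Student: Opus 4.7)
The plan is to verify the two conjuncts of Eq.\ \ref{eq:pivot_set} in Definition \ref{def:deadend_mask} directly for $\Gamma_+ = \mathrm{dom}(D)$, using only the hypothesis $D \subseteq \hat{M}_+$ together with the fact that $D$ is a dead-end pattern (Definition \ref{def:deadend}).

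First I would dispose of the containment requirement $\Gamma_+ \subseteq \mathrm{dom}(\hat{M}_+)$. Since $D \subseteq \hat{M}_+$, taking domains gives $\mathrm{dom}(D) \subseteq \mathrm{dom}(\hat{M}_+)$, which is exactly $\Gamma_+ \subseteq \mathrm{dom}(\hat{M}_+)$.

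Next I would establish the key identity $\{(u_i, v) \in \hat{M}_+ \mid u_i \in \Gamma_+\} = D$. The inclusion $D \subseteq \{(u_i, v) \in \hat{M}_+ \mid u_i \in \Gamma_+\}$ is immediate: every pair $(u_i, v) \in D$ lies in $\hat{M}_+$ by hypothesis and has $u_i \in \mathrm{dom}(D) = \Gamma_+$. For the reverse inclusion, suppose $(u_i, v) \in \hat{M}_+$ and $u_i \in \mathrm{dom}(D)$. Then there exists some $v'$ with $(u_i, v') \in D \subseteq \hat{M}_+$. Since $\hat{M}_+$ is an embedding and therefore single-valued on $u_i$, we must have $v = v'$, so $(u_i, v) \in D$. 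This bookkeeping argument about single-valuedness is the only step with any subtlety, and it is not a real obstacle.

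Finally, combining the identity with $\mathrm{Dead}(D)$—which holds because $D$ is a dead-end pattern—yields $\mathrm{Dead}(\{(u_i, v) \in \hat{M}_+ \mid u_i \in \Gamma_+\})$. Together with the containment shown above, this verifies Eq.\ \ref{eq:pivot_set} for $\hat{M}_+$ and $\Gamma_+$, so $\Gamma_+$ is a dead-end mask of $\hat{M}_+$, completing the proof.
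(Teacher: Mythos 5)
Your proof is correct and follows essentially the same route as the paper's: both verify the two conjuncts of Eq.~\ref{eq:pivot_set} directly from $D \subseteq \hat{M}_+$ and $\mathrm{Dead}(D)$. The only difference is that you explicitly prove the identity $\{(u_i, v) \in \hat{M}_+ \mid u_i \in \Gamma_+\} = D$ via single-valuedness of the embedding, a step the paper's proof leaves implicit.
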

\begin{proof}
$\Gamma_+ \subseteq \mathrm{dom}(\hat{M}_+)$ holds because $D \subseteq \hat{M}_+$.
In addition, $\mathrm{Dead}(D)$ holds because $D$ is a dead-end.
Thus, $\Gamma_+$ is a dead-end mask of $\hat{M}_+$.
\end{proof}

\subsubsection{Case 4: Failing Recursion}
\label{sec:recursion_failed}

Algorithm \ref{alg:backtracking} recursively calls \textproc{Search} with $\hat{M}_+$.
If $\hat{M}_+$ is a dead-end, the callee is responsible for extracting a dead-end pattern.
By Lemma \ref{lem:pivot_deadend_match}, the caller can obtain a dead-end mask of $\hat{M}_+$ from the dead-end pattern extracted in the callee.

\subsubsection{Dead-end Mask Aggregation}

To extract a dead-end pattern from $\hat{M}$, function \textproc{Search} needs a dead-end mask of $\hat{M}$.
It is a subset of $\{u_1, u_2, \ldots, u_k\}$ by Definition \ref{def:deadend_mask}.
However, Lemma \ref{lem:pivot_noninjective} and \ref{lem:pivot_deadend_match} give a dead-end mask of $\hat{M}_+$, which may contain $u_{k+1}$ because $\hat{M}_+$ has a mapping of $u_{k+1}$.
Thus, we need to convert a dead-end mask of $\hat{M}_+$ to that of $\hat{M}$.

Lemma \ref{lem:pivot_noninjective} and \ref{lem:pivot_deadend_match} is used in the loop over $v \in C'[u_{k+1}]$ (line 7), and so we have dead-end masks for each $\hat{M}_+ = \hat{M} \cup \{(u_{k+1}, v)\}$.
From them, we can compute $\Gamma_*$, a set of query vertices, defined as follows:
\begin{equation}
\Gamma_* = \bigcup_{v \in C'[u_{k+1}]} \left( \text{dead-end mask of}\ \hat{M} \cup \{(u_{k+1}, v)\} \right).
\end{equation}
Here, the following lemma holds for $\Gamma_*$.

\begin{lemma}[Dead-end mask aggregation]
\label{lem:aggregated_deadend}
If $C'[u_{k+1}] \neq \emptyset$ holds, $\Gamma$ defined as follows is a dead-end mask of $\hat{M}$:
\begin{equation}
  \label{eq:aggregated_deadend}
  \Gamma = \left\{\begin{array}{ll}
    \left( \Gamma_* \cup N(u_{k+1}) \right) \cap \mathrm{dom}(\hat{M}) & \text{if}\ u_{k+1} \in \Gamma_* \\
    \Gamma_* & \text{if}\ u_{k+1} \not\in \Gamma_*.
  \end{array}\right. \\
\end{equation}
\end{lemma}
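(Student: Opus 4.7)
The plan is to verify the two defining conditions of Definition \ref{def:deadend_mask}: (i) $\Gamma \subseteq \mathrm{dom}(\hat{M})$, and (ii) the restriction $D = \{(u_i, v) \in \hat{M} \mid u_i \in \Gamma\}$ is a dead-end. Condition (i) is immediate in both cases: when $u_{k+1} \in \Gamma_*$ the definition intersects with $\mathrm{dom}(\hat{M})$ explicitly; when $u_{k+1} \notin \Gamma_*$, every per-$v$ mask $\Gamma^{(v)}$ that composes $\Gamma_*$ lies in $\mathrm{dom}(\hat{M}) \cup \{u_{k+1}\}$ and $u_{k+1}$ is excluded by hypothesis. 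I will establish (ii) by contradiction in each case: assume a complete embedding $M$ satisfies $D \subseteq M$ and exhibit a previously established dead-end pattern contained in $M$.

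For Case A ($u_{k+1} \notin \Gamma_*$), the hypothesis $C'[u_{k+1}] \neq \emptyset$ lets me fix any $v \in C'[u_{k+1}]$ and consider the mask $\Gamma^{(v)}$ that contributed to $\Gamma_*$. Because $u_{k+1} \notin \Gamma^{(v)} \subseteq \Gamma_* = \Gamma$, the dead-end pattern that $\Gamma^{(v)}$ induces on $\hat{M} \cup \{(u_{k+1}, v)\}$ collapses to the restriction of $\hat{M}$ to $\Gamma^{(v)}$, which is a subset of $D$ and hence of $M$. This contradicts its dead-endness.

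Case B ($u_{k+1} \in \Gamma_*$) is the crux. Setting $v^{\star} = M[u_{k+1}]$, I plan to show $v^{\star} \in C'[u_{k+1}]$ so that the corresponding mask $\Gamma^{(v^{\star})}$ appears in the aggregation. Inspecting Eq.\ \ref{eq:candidate_refinement}, the label part follows from the label constraint on $M$, so the substantive step is $v^{\star} \in N(\hat{M}[u'])$ for every $u' \in N(u_{k+1}) \cap \mathrm{dom}(\hat{M})$. This is exactly where the $N(u_{k+1})$ augmentation in the definition of $\Gamma$ earns its keep: it forces $N(u_{k+1}) \cap \mathrm{dom}(\hat{M}) \subseteq \Gamma$, so $D \subseteq M$ yields $\hat{M}[u'] = M[u']$ on those neighbors, and the edge constraint on $M$ gives $v^{\star} \in N(M[u']) = N(\hat{M}[u'])$. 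With $v^{\star} \in C'[u_{k+1}]$ in hand, $\Gamma^{(v^{\star})} \subseteq \Gamma_*$ and its induced dead-end pattern sits inside $M$: the $u_{k+1}$ component matches because $v^{\star} = M[u_{k+1}]$, while any $u_i \in \Gamma^{(v^{\star})} \setminus \{u_{k+1}\}$ lies in $\Gamma_* \cap \mathrm{dom}(\hat{M}) \subseteq \Gamma$, so $(u_i, \hat{M}[u_i]) \in D \subseteq M$. This yields the contradiction.

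I expect the verification $v^{\star} \in C'[u_{k+1}]$ to be the main obstacle, and it is what dictates the two-branch form of the statement: the $N(u_{k+1})$ augmentation is needed precisely when $u_{k+1}$ has slipped into $\Gamma_*$, so that we can force $M$ to agree with $\hat{M}$ on the neighbors of $u_{k+1}$ and thereby place $v^{\star}$ into the refined candidate set.
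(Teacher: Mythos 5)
Your proof is correct and follows essentially the same route as the paper's: assume a complete embedding $M$ extends the restriction of $\hat M$ to $\Gamma$, use $N(u_{k+1})\cap\mathrm{dom}(\hat M)\subseteq\Gamma$ to force $M[u_{k+1}]\in C'[u_{k+1}]$, and then exhibit the corresponding per-candidate dead-end pattern inside $M$ for a contradiction. The only differences are cosmetic --- you spell out the $u_{k+1}\notin\Gamma_*$ case that the paper dismisses as trivial, and you work with the individual mask $\Gamma^{(v^\star)}$ rather than the paper's aggregated family $\mathfrak{D}_+$.
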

\begin{proof}
Since the case of $u_{k + 1} \not\in \Gamma_*$ is trivial, we consider the case of $u_{k + 1} \in \Gamma_*$.
Letting $\Gamma' = \Gamma_* \cap \mathrm{dom}(\hat{M})$, $D = \{(u_i, v) \in \hat{M} \,|\, u_i \in \Gamma' \}$, and $v \in C'[u_{k+1}]$, $D \cup \{(u_{k+1}, v)\}$ is a dead-end because it is a superset of the dead-end pattern extracted from $\hat{M} \cup \{(u_{k+1}, v)\}$.
Hence, $\mathfrak{D}_+ = \{ D \cup \{(u_{k+1}, v)\} \,|\, v \in C'[u_{k+1}] \}$ is a set of dead-end patterns.
Here, if we assume $\Gamma$ is not a dead-end mask of $\hat{M}$, there exists complete embedding $M$ such that $\{ (u_i, v) \in \hat{M} \,|\, u_i \in \Gamma \} \subseteq M$.
Letting $C'_M[u_{k+1}]$ be a candidate set obtained by substituting $M$ for $\hat{M}$ in Eq.\ \ref{eq:candidate_refinement}, $C'_M[u_{k+1}] = C'[u_{k+1}]$ since $N(u_{k+1}) \cap \mathrm{dom}(\hat{M}) \subseteq \Gamma$.
Thus, $\exists D_+ \in \mathfrak{D}_+,\, D_+ \subseteq M$ because $M[u_i] = \hat{M}[u_i]$ for $u_i \in \Gamma'$ and $M[u_{k+1}] \in C'[u_{k+1}]$.
However, this means $M$ is a complete embedding that contains a dead-end pattern.
By contradiction, $\Gamma$ is a dead-end mask of $\hat{M}$.
\end{proof}

Thus, we obtain a dead-end mask $\Gamma$ of $\hat{M}$.
Function \textproc{Search} use this for extracting a dead-end pattern from $\hat{M}$.

\subsection{Management of Daed-end Patterns}
\label{sec:deadend_management}

For simplicity, Section \ref{sec:pruning_by_deadend} described that our method records dead-end patterns in set $\mathcal{D}$ and prune partial embeddings if they match dead-end patterns in $\mathcal{D}$.
However, this mechanism is impractical if it is straightforwardly implemented because of spatial and temporal limitations.
From the spatial aspect, the number of dead-end patterns may increase up to the number of all the possible combinations of candidate vertices (i.e., $|C[u_1]| |C[u_2]| \ldots |C[u_n]|$).
Thus, the size of $\mathcal{D}$ may exceed the memory capacity.
From the temporal aspect, the pruning incurs overheads to check if the partial embedding contains a dead-end pattern in $\mathcal{D}$.
If we employ linear search over $\mathcal{D}$ and element-wise set containment tests, the overheads will be unacceptably large.
For making our pruning method feasible, this section introduces two techniques to efficiently manage dead-end patterns.

\subsubsection{Dead-end Patterns in Fixed-size Hash Table}

To mitigate the spatial and temporal issues, we employ a hash table to store dead-end patterns.
The key in the table is a vertex mapping (i.e., a pair of a query vertex and data vertex) added to the partial embedding at last.
Specifically, letting $\Delta$ be a hash table and $\hat{M} = \{(u_1, v_{i_1}), (u_2, v_{i_2}), \ldots, (u_k, v_{i_k})\}$ be a dead-end partial embedding, a dead-end pattern extracted from $\hat{M}$ is stored at $\Delta[u_k, v_{i_k}]$.
This is because the last mapping is most rarely overwritten.
We can also look up a dead-end pattern that may match the partial embedding by using its last mapping as a key.
This can be done in $O(1)$ time.

\subsubsection{Numeric Representation of Dead-end Patterns}
\label{sec:numeric_representation}

Storing dead-end patterns in the hash table offers efficient look-ups of dead-end patterns.
However, since dead-end pattern $D$ contains up to $|V_Q|$ mappings, it requires $O(|V_Q|)$ time to check if partial embedding $\hat{M}$ matches $D$ (i.e., $D \subseteq \hat{M}$).
This matching check is performed frequently, and thus largely affects the performance.

To address this problem, our method represents a dead-end pattern with a single integer.
This enables the matching check in $O(1)$ time.
Due to the space limitation, we here describe the basic idea only.
Our idea is based on the property of recursive calls in the backtracking.
Specifically, it exploits that function \textproc{Search} is called with unique partial embedding $\hat{M}$ for each call.
With this property, we use the call count of \textproc{Search} as an ID number (embedding ID) of partial embeddings.
For example, letting $\nu(\hat{M})$ be a function that maps partial embedding $\hat{M}$ to its embedding ID, 
we have $\nu(\{(u_1, v_1)\}) = 1$, $\nu(\{(u_1, v_1), (u_2, v_2)\}) = 2$, and $\nu(\{(u_1, v_1), (u_2, v_2), (u_3, v_5)\}) = 3$ in the search tree shown in Fig.\ \ref{fig:injection_example_searchtree_gup}.

However, there are still two problems in the numeric representation of dead-end patterns.
First, a partial embedding has an embedding ID only if it contains mappings of sequential query vertices (i.e., $u_1, u_2, u_3, \ldots$) although a dead-end pattern may lack some of them.
To mitigate this problem, we exploit that dead-end patterns are stored in the hash table.
The hash table $\Delta$ always stores a dead-end pattern extracted from $\hat{M}$ at $\Delta[u_k, \hat{M}[u_k]]$ where $k = |\hat{M}|$.
Since its location tells the last mapping of the dead-end pattern, we can ignore the last mapping in conversion from the dead-end pattern to an embedding ID.
Specifically, when we extract a dead-end pattern from $\hat{M}$ using its dead-end mask $\Gamma$, we ignore $u_k$ in $\Gamma$ (if it exists).
Second, we need to manage mappings from a partial embedding to an embedding ID.
The number of embedding IDs equals the number of calls of \textproc{Search}, and so they cannot be on memory.
To solve this, we manage embedding IDs only of every subset of the current partial embedding.
When the algorithm is processing $\hat{M}$, we maintain $\Phi$, an array of embedding IDs, so that $\Phi[\mu] = \nu(\{ (u_i, v) \in \hat{M} \,|\, i \leq \mu \})$.
This can be done by simply recording the number of calls in $\Phi[u_k]$ when $\textproc{Search}(\hat{M}, C)$ is called.

By using these techniques, given a dead-end partial embedding $\hat{M}$ and its dead-end mask $\Gamma$, the dead-end pattern can be represented by triplet of (i) an embedding ID, (ii) the number of mappings in a partial embedding that the embedding ID represents, and (iii) an original dead-end mask.
Specifically, we store dead-end patterns as follows:
\begin{equation}
\Delta[u_k, \hat{M}[u_k]] \gets \left( \Phi[\mu], \mu, \Gamma \right).
\end{equation}
where $\mu = \max_{u_i \in \Gamma, i < k} i$.
The triplet has dead-end mask $\Gamma$ to obtain a new dead-end pattern by using Lemma \ref{lem:pivot_deadend_match}.
Given partial embedding $\hat{M}$ ($k = |\hat{M}|$) and array of its embedding IDs $\Phi$, we can check if $\hat{M}$ matches a dead-end pattern by the following condition:
\begin{equation}
\Phi[\mu] = \phi, \ \text{where}\ (\phi, \mu, \Gamma) = \Delta[u_k, \hat{M}[u_k]].
\end{equation}
The whole matching check can be performed in $O(1)$ time because both access to $\Delta$ and comparison of embedding IDs consume $O(1)$ time.
Due to space limitation, the algorithm shown in the next subsection omits details of the numerical representation of dead-end patterns.
It internally uses representations described above to reduce overheads of dead-end pruning.

\subsection{Algorithm Details}

Algorithm \ref{alg:search_details} shows our subgraph matching algorithm.
Note that this algorithm assumes that global variable $\Delta$ is a hash table of dead-end patterns.
Function \textproc{Search} takes partial embedding $\hat{M}$ and candidate set $C$ and returns a dead-end mask of $\hat{M}$ if $\hat{M}$ is found to be a dead-end; otherwise, it returns an empty set.
Function call $\textproc{Search}(\emptyset, C)$ starts the search.
Algorithm \ref{alg:search_details} differs from Algorithm \ref{alg:backtracking} in dead-end mask selection (e.g.\ line 8), pruning (line 14), and dead-end pattern recording (line 20).
Dead-end mask selection is performed together with checks for each reason of search failures.
Except for the `empty candidate set' case, Lemmas give a dead-end mask of the extended partial embedding $M \cup \{(u_{k+1}, v)\}$.
We accumulate these masks in $\Gamma_*$ and convert it to the dead-end mask of $\hat{M}$ after the loop (line 18).
The second difference, matching with dead-end patterns, is lightweight because the partial embedding is compared with only one element in the hash table, i.e., $\Delta[u_{k+1}, v]$.
Note that we assume that $\Delta[u_{k+1}, v] \subseteq \hat{M}$ is false if $\Delta[u_{k+1}, v]$ is not yet defined during the algorithm.
The third difference, dead-end pattern recording, is performed if the recursively called function does not report any complete embeddings.
This condition equals that $\hat{M}$ is a dead-end and $\hat{M}$ is not empty (line 19).
The latter condition is checked to avoid accesss to undefind value $\hat{M}[u_k]$ if $\hat{M} = \emptyset$.
Last, the algorithm returns a dead-end mask if $\hat{M}$ is a dead-end (line 21); otherwise, it returns an empty set (line 22).

\begin{algorithm}[!t]
\caption{Detailed Search Algorithm}
\begin{algorithmic}[1]
  \Require Query graph $Q$ and data graph $G$
  \Ensure All the embeddings of $Q$ in $G$

  \Function{Search}{$\hat{M}, C$}
    \State $k \leftarrow |\hat{M}|$
    \If{$k = |V_Q|$}
      \State Report $\hat{M}$ as a complete embedding
      \State \Return $\emptyset$
    \EndIf
    \State $C' \leftarrow$ Candidates refined with the edge constraints (Eq.\ \ref{eq:candidate_refinement})
    \If{There exists $u_i$ s.t.\ $C'[u_i] = \emptyset$}
      \State $\Gamma \leftarrow N(u_i) \cap \mathrm{dom}(\hat{M})$  \Comment{Lemma \ref{lem:pivot_no_candidate}}
    \Else
      \State $\Gamma_* \leftarrow \emptyset$
      \ForAll{$v \in C'[u_{k+1}]$}
	\If{$v \in \mathrm{dom}(\hat{M})$}
	  \State $\Gamma_* \leftarrow \Gamma_* \cup \{\, u_i \in \mathrm{dom}(\hat{M}) \,|\, \hat{M}[u_i] = v \,\}$  \Comment{Lemma \ref{lem:pivot_noninjective}}
	\ElsIf{$\Delta[u_{k+1}, v] \subseteq \hat{M}$}
	  \State $\Gamma_* \leftarrow \Gamma_* \cup \Delta[u_{k+1}, v]$  \Comment{Lemma \ref{lem:pivot_deadend_match}}
	\Else
	  \State $\Gamma_* \leftarrow \Gamma_* \cup \textproc{Search}(\hat{M} \cup \{(u_{k+1}, v)\}, C)$
	\EndIf
      \EndFor
      \State $\Gamma \leftarrow \Gamma_*$ converted with Eq.\ \ref{eq:aggregated_deadend}  \Comment{Lemma \ref{lem:aggregated_deadend}}
    \EndIf
    \If{No report in the recursive call and $\hat{M} \neq \emptyset$}
      \State $\Delta[u_k, \hat{M}[u_k]] \leftarrow \{\, (u_i, v) \in \hat{M} \,|\, u_i \in \Gamma \,\}$
      \State \Return $\Gamma$
    \EndIf
    \State \Return $\emptyset$
  \EndFunction
\end{algorithmic}
\label{alg:search_details}
\end{algorithm}

This algorithm holds the following property.

\begin{theorem}[Completeness]
Algorithm \ref{alg:search_details} reports all the embeddings of query graph $Q$ within data graph $G$.
\end{theorem}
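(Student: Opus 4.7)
The plan is to prove completeness by strong induction on the depth of the recursion tree rooted at the current invocation of \textproc{Search}, establishing in parallel the key invariant that every entry ever written into the hash table $\Delta$ is a genuine dead-end pattern. Without this invariant, the pruning at line 14 would be unjustified; with it, the only branches Algorithm \ref{alg:search_details} skips relative to the naive backtracking (Algorithm \ref{alg:backtracking}) are those provably incapable of producing a complete embedding, so completeness reduces to completeness of Algorithm \ref{alg:backtracking}.

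First I would verify the $\Delta$-invariant. A write occurs only at line 20, and its content $\{(u_i, v) \in \hat{M} \mid u_i \in \Gamma\}$ is justified by the composition of Lemmas \ref{lem:pivot_no_candidate}--\ref{lem:aggregated_deadend}. If the `empty candidate' branch fires at line 8, Lemma \ref{lem:pivot_no_candidate} directly certifies the mask; otherwise $\Gamma_*$ is assembled across the loop from Lemma \ref{lem:pivot_noninjective} (non-injective case), Lemma \ref{lem:pivot_deadend_match} (match with $\Delta[u_{k+1}, v]$, which by the inductive hypothesis is itself a dead-end pattern), and the mask returned by the recursive call. Lemma \ref{lem:aggregated_deadend} then converts $\Gamma_*$ into a dead-end mask of $\hat{M}$. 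The line-19 guard ensures the write only happens when no recursive call reported a complete embedding, so none of the accumulated masks was invalidated by a successful extension, and $\hat{M}$ is itself a dead-end.

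Next I would carry out the main induction, proving that for every invocation $\textproc{Search}(\hat{M}, C)$ with $\hat{M} \subseteq M$ for some complete embedding $M$, the call reports $M$. The base case $k = |V_Q|$ forces $\hat{M} = M$ and is discharged at lines 3--5. For the inductive step, let $v = M[u_{k+1}]$. Because $M$ satisfies all three subgraph-isomorphism constraints, one can verify in turn: (i) $v \in C'[u_{k+1}]$ by the label and edge constraints together with Eq.~\ref{eq:candidate_refinement}; (ii) no query vertex $u_i$ has $C'[u_i] = \emptyset$, because that would preclude any extension of $\hat{M}$, contradicting the existence of $M$, so line 8 does not take the dead-end branch; (iii) $v \notin \mathrm{ran}(\hat{M})$ by the injection constraint, so line 12 does not fire for this $v$; and (iv) line 14 does not fire either, because $\Delta[u_{k+1}, v] \subseteq \hat{M} \subseteq M$ combined with $(u_{k+1}, v) \in M$ would force $M$ to contain a dead-end pattern, contradicting the $\Delta$-invariant just established. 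Hence the recursive call at line 17 is issued on $\hat{M} \cup \{(u_{k+1}, v)\} \subseteq M$, and by the inductive hypothesis $M$ is reported.

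The main obstacle I anticipate is reconciling the proof with the succinct numeric representation of Section \ref{sec:numeric_representation}, under which the literal containment check $\Delta[u_{k+1}, v] \subseteq \hat{M}$ is implemented as an embedding-ID equality $\Phi[\mu] = \phi$, and the stored mask $\Gamma$ excludes $u_k$. A formally complete argument must show that this surrogate test is equivalent to literal subset containment of the mapping pairs determined by the stored triplet $(\Phi[\mu], \mu, \Gamma)$, using that each embedding ID uniquely identifies the partial embedding at the call where it was assigned and that reconstructing the omitted $u_k$-mapping from the hash-key position is sound. Once this equivalence and the soundness of omitting $u_k$ from $\Gamma$ before storage are established, the inductive argument above transfers verbatim, completing the proof.
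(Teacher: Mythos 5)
Your proposal is correct and follows essentially the same route as the paper's proof: completeness is reduced to the completeness of the naive backtracking by showing that the only branches pruned at line 14 are dead-ends (justified via Lemmas \ref{lem:pivot_no_candidate}--\ref{lem:aggregated_deadend}), and dead-ends by Definition \ref{def:deadend} cannot contain a complete embedding. Your version is considerably more careful than the paper's three-line argument --- in particular, you make explicit the induction needed to establish that every entry of $\Delta$ is a genuine dead-end pattern (breaking the apparent circularity in which Lemma \ref{lem:pivot_deadend_match} justifies new entries using older ones), and you flag the additional soundness obligation for the numeric surrogate of the subset test, both of which the paper leaves implicit.
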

\begin{proof}
Compared with the naive backtracking (Algorithm \ref{alg:backtracking}), this algorithm searches differently because of matching with dead-end patterms at line 14.
From Lemma \ref{lem:pivot_no_candidate}, \ref{lem:pivot_noninjective}, \ref{lem:pivot_deadend_match}, and \ref{lem:aggregated_deadend}, $\Delta[u_{k+1}, v]$, partial embedding $\hat{M}$ is a dead-end if it holds $\Delta[u_{k+1}, v] \subseteq \hat{M}$.
From Definition \ref{def:deadend}, dead-end partial embeddings do not yield a complete embedding.
Therefore, this algorithm reports all the complete embeddings even if it prune dead-end partial embeddings.
\end{proof}

\section{Evaluation}
\label{sec:evaluation}

This section evalutes the performance of our algorithm.
We implemented our algorithm with structural analysis-based pruning and matching order selection that are proposed in CFL-Match, the state-of-the-art method \cite{Bi2016}.
The evaluation also uses CFL-Match, QuickSI \cite{Shang2008}, and GraphQL\cite{He2008} for comparison.
Quick SI and Graph QL shows high performance in \cite{Lee2012a}.
We obtained an implementation of CFL-Match from its author and implementations of Quick SI and Graph QL from the author of \cite{Lee2012a}.
Our machine is equipped with Intel Xeon E5-2697 v2 and 128GB memory.
We use yeast and human as a data graph, which are widely used in the previous studies \cite{He2008, Shang2008, Zhao2010, Han2013, Ren2015, Bi2016}.
Both are a protein-protein interaction network.
yeast has 3112 vertices, 12519 edges, and 71 vertex labels.
human has 4674 vertices, 86282 edges, and 44 vertex labels.
We generate query graphs by extracting a connected component in a data graph with a random walk.
In the experiments, each algorithm processes a query set that contains 10000 queries.
Query sets vary in the number of vertices in a query graph.
If an algorithm cannot process a query set in one day, we consider it Do-Not-Finish (DNF).
Since it may be impractical to enumerate all the embeddings due to the combinatorial explosion, we stop enumerating if 1000 embeddings are found, similarly to \cite{Han2013}.

\subsection{Query processing time}
\label{sec:wla_runtime}

\begin{figure}[t]
  \centering
  \subfloat[yeast data set]{
    \includegraphics[width=0.48\hsize, page=1]{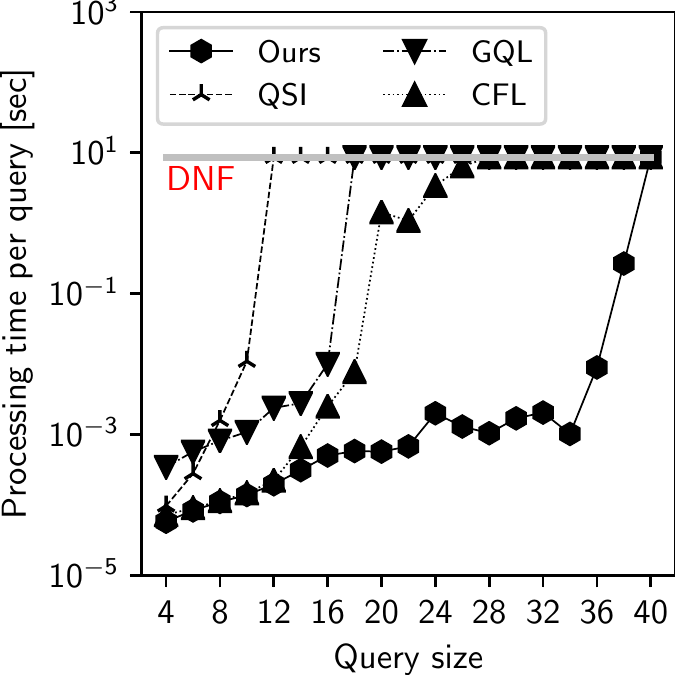}
    \label{fig:wla_yeast_path}
  }
  \subfloat[human data set]{
    \includegraphics[width=0.48\hsize, page=1]{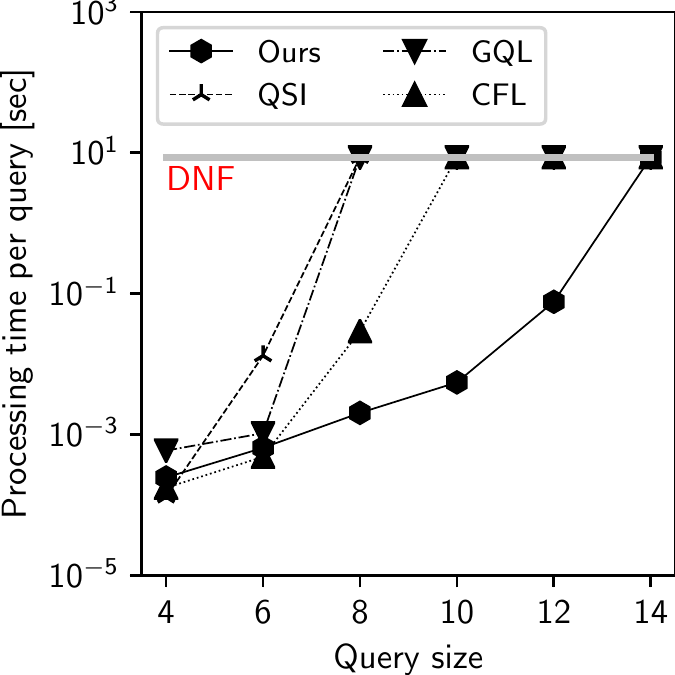}
    \label{fig:wla_yeast_tree}
  }
  \caption{Average procesing time for one query of each algorithm}
  \label{fig:wla_yeast}
\end{figure}

First, we compare query processing time to evaluate the performance improvement yielded by dead-end pruning.
Fig.\ \ref{fig:wla_yeast} shows the result.
Ours, QSI, GQL, and CFL represent our method, QuickSI, GraphQL, and CFL-Match, respectively.
All the methods become DNF for 40-vertex queries on yeast and 14-vertex queries on human.
As a whole, our method shows the best performance for almost all the query size and data sets.
It is especially efficient for large-scale queries.
For example, it shows 1000 times higher performance for 26--36-vertex queries on yeast.
As shown in Fig.\ \ref{fig:injection_example}, the effectiveness of the exisitng methods sensitively depend on structures of given graphs.
This matters more seriously for larger queries because they have the more combinations of candidate vertices.
Compared with the existing methods, our method prunes unnecessary searches by dead-end pruning.
This reduces the sensitiveness to graph structures.
Thus, our method can reduce processsing time even for large queries.

\subsection{The number of pruning}

\begin{figure}[t]
  \centering
  \subfloat[yeast dataset]{
    \includegraphics[width=0.48\hsize, page=1]{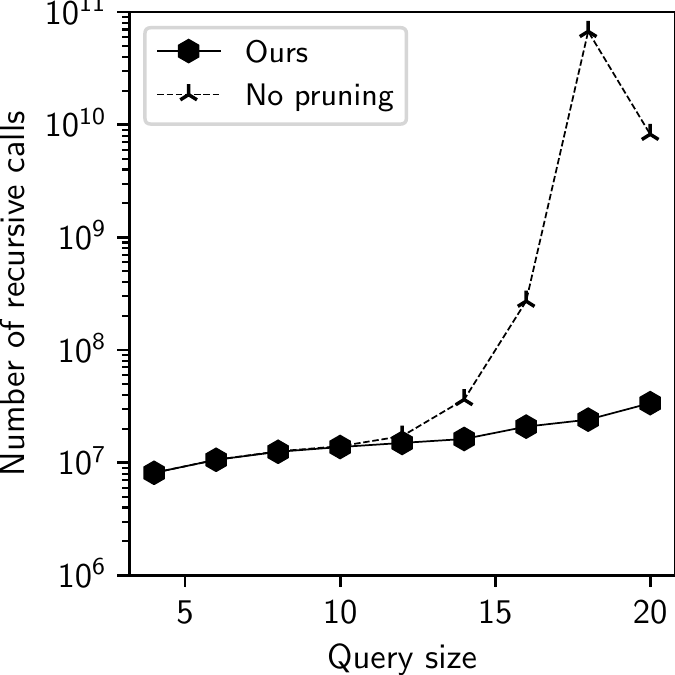}
    \label{fig:rec_yeast_walk}
  }
  \subfloat[human dataset]{
    \includegraphics[width=0.48\hsize, page=1]{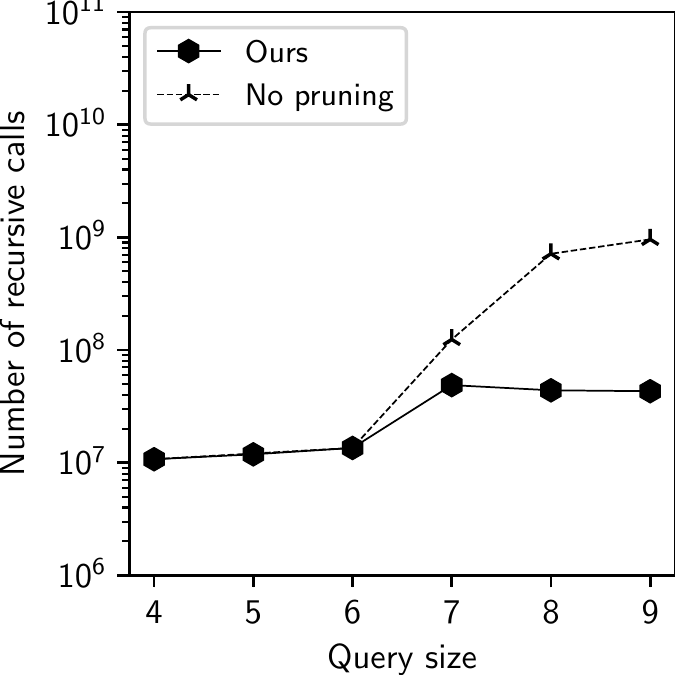}
    \label{fig:rec_human_walk}
  }
  \caption{The number of recursive calls during processing one query set}
  \label{fig:rec_walk}
\end{figure}

Next, to understand what offers the performance improvement, we focus on the number of recursive calls of function \textproc{Search} (Algorithm \ref{alg:search_details}).
This experiment compares the number of calls between our method (`Ours') and our method without dead-end pruning (`No pruning') that is identical to `Ours' except that it lacks lines 14 and 15 of Algorithm \ref{alg:search_details}.
The difference in the number of recursive calls shows an effect of dead-end pruning.
Fig.\ \ref{fig:rec_walk} shows the result.
We omit the results for queries which have over 20 vertices on yeast and over 9 vertices on human because those cause DNF for `No pruning'.
The number of recursion shows that there are few prunings for small queries, but it increases as the size of queries increase.
For example, `No pruning' recurses about $6.7 \times 10^{10}$ times for 18-vertex queries on yeast, but `Ours' recurses only about $2.4 \times 10^7$ times.
This is because larger queries tend to involve more search failures.
The number of violations of the injection constraint and the edge constraint increases along with the number of vertices and edges in the query graph.
Our method significantly improves the performance by reducing search failures caused by these reasons.
Our method also shows the comparable performance for small queries.
This is because the overheads for dead-end pruning is small owing to the effcient management of dead-end patterns described in Section \ref{sec:deadend_management}.

\section{Conclusion}
\label{sec:conclusion}

Subgraph matching is widely used, but it suffers from high computational cost due to its NP-hardness.
This paper propose a subgraph matching algorithm that improves the performance by learning from failures.
Specifically, it generates dead-end patterns from partial embeddings that caused a search failure during the backtracking and, in the subsequent process, prunes partial embeddings that match dead-end patterns.
The experimental results show that our method is up to 10000 times faster than existing methods.


\balance


\bibliographystyle{abbrv}
\bibliography{library}  


%
%
%
%

\end{document}